%
\documentclass[runningheads]{llncs}
\usepackage[normalem]{ulem}
\usepackage[T1]{fontenc}
%
\usepackage{adjustbox}
\usepackage{multirow}
\usepackage{graphicx}
\usepackage{xcolor}
\usepackage{amsmath,amssymb,amsfonts}
\usepackage{mathpartir}
%
\usepackage{hyperref}

\urlstyle{rm}

\spnewtheorem{observation}{Observation}{\bfseries}{\itshape}

\usepackage[linesnumbered, noend]{algorithm2e}
\SetKwComment{Comment}{/* }{ */}
\SetKwInOut{Input}{input}
\SetKwInOut{Output}{output}
\SetKwRepeat{Do}{do}{while}
\RestyleAlgo{ruled}
\usepackage{comment}


\newcommand{\BWT}{\ensuremath{\mathrm{BWT}}}
\newcommand{\SA}{\ensuremath{\mathrm{SA}}}

\newcommand{\LCP}{\ensuremath{\mathrm{LCP}}}
\newcommand{\BWTr}{\ensuremath{\mathrm{BWT}_r}}
\newcommand{\SAr}{\ensuremath{\mathrm{SA}_r}}
\newcommand{\LCPr}{\ensuremath{\mathrm{LCP}_r}}
\newcommand{\PSV}{\ensuremath{\mathrm{PSV}}}
\newcommand{\NSV}{\ensuremath{\mathrm{NSV}}}

\newcommand{\SSS}{\mathcal S}

\newcommand{\rev}[1]{#1^{\text{rev}}}


\begin{document}
\title{Testing Suffixient Sets}
%
%
\author{Davide Cenzato\inst{1}\thanks{Funded by the European Union (ERC, REGINDEX, 101039208). Views and opinions expressed are however those of the author(s) only and do not necessarily reflect those of the European Union or the European Research Council. Neither the European Union nor the granting authority can be held responsible for them.}\orcidID{0000-0002-0098-3620} \and
Francisco Olivares\inst{2,3}\thanks{Funded by Basal Funds FB0001 and AFB240001, and Fondecyt Grant 1260080, ANID, Chile.}\orcidID{0000-0001-7881-9794} \and
Nicola Prezza\inst{1}$^{\star}$\orcidID{0000-0003-3553-4953}}
\authorrunning{Davide Cenzato, Francisco Olivares and Nicola Prezza}
%
\institute{DAIS, Ca' Foscari University of Venice, Italy\\
\email{\{davide.cenzato,nicola.prezza\}@unive.it}
\and
Centre for Biotechnology and Bioengineering, Chile \and 
Department of Computer Science, University of Chile, Chile\\
\email{folivares@uchile.cl}
}
\maketitle              
\begin{abstract}
\emph{Suffixient sets} are a novel prefix array (PA) compression technique based on \emph{subsampling} PA (rather than compressing the entire array like previous techniques used to do): by storing very few entries of PA (in fact, a compressed number of entries), one can prove that pattern matching via binary search is still possible provided that random access is available on the text. In this paper, we tackle the problems of determining whether a given subset of text positions is (1) a suffixient set or (2) a suffixient set of minimum cardinality. We provide linear-time algorithms solving these problems.

\keywords{Compressed text indexing \and suffixient sets \and suffix array}
\end{abstract}
\section{Introduction}



Suffixient sets (see Definition \ref{def:suffixient} for a formal definition) are a new suffix array (SA) \cite{ManberM93} compression technique introduced by Depuydt et al.\ in 2024 \cite{Depuydt24,TOCSpaper}. Unlike previous approaches, this method compresses the Prefix Array (PA), i.e., the mirror version of the SA storing the co-lexicographic order of the text's prefixes, by subsampling a small set of positions corresponding to right-maximal substrings in the original text, rather than encoding the full PA. In \cite{TOCSpaper}, we proved that this subset of PA positions is sufficient to support pattern matching queries \cite{Navarro21} via binary search, just as with the original suffix array, when providing random access to the text. In particular, this novel compression scheme outperforms the $r$-index \cite{GagieNP20} both in time and space due to a better memory locality and to the fact that the size of the smallest suffixient set is often smaller than the number of equal-letter runs in the Burrows-Wheeler transform (the parameter that ultimately determines the size of the r-index). 

The size of indexes based on suffixient sets, such as the suffixient array in \cite{TOCSpaper}, is directly influenced by the cardinality of these sets. This underlines the importance of computing a suffixient set with the fewest amount of samples. In a previous work presented at SPIRE \cite{CenzatoOP24}, we addressed the problem of computing a suffixient set of minimum cardinality for any given text. 
That work, however, left open the following complementary (and natural) questions:


\begin{problem}[Suffixiency]\label{problem1}
    Given set $S \subseteq [1,n]$ and a text $T \in \Sigma^n$, is $S$ a suffixient set for $T$?
\end{problem}

\begin{problem}[Minimum suffixiency]\label{problem2}
    Given a suffixient set $S \subseteq [1,n]$ and a text $T \in \Sigma^n$, is $S$ a suffixient set of minimum cardinality for $T$?
\end{problem}

Observe that these natural problems were tackled for string attractors in \cite{kempa2018string}. Suffixient sets are indeed string attractors \cite{TOCSpaper}, but possess additional properties which make the algorithm in \cite{kempa2018string} not suitable for solving the above problems. 


In this paper, we show that Problems \ref{problem1} and \ref{problem2} can be solved in linear time. An implementation of our algorithms is publicly available online in the repository {\tt \url{https://github.com/regindex/suffixient}}.

\section{Preliminaries}\label{sec:preliminaries}
Given $i, j, n \in \mathbb{N}$ such that $1, \leq i, j \leq n$, and given a finite alphabet $\Sigma = \{1, \dots, \sigma\}$, we use the following notation: $[i, j] 
= \{i, i + 1, \dots, j\}$, if $i \leq j$ and $[i, j] = \emptyset$, otherwise; $[n]$ for the interval $[1, n]$; $T = T[1, n]$ for a length-$n$ string $T$ over $\Sigma$; $\vert T \vert$ for the length of $T$; $\varepsilon$ for the empty string, which is the only string satisfying $\vert \varepsilon \vert = 0$; $T[i]$ for the $i$-th character of $T$; $T[i, j]$ for the substring $T[i] \dots T[j]$, if $i \leq j$, and $T[i, j] = \varepsilon$, otherwise; $T[1, i]$ for a prefix of $T$ ending a position $i$, and $T[j, n]$ for a suffix of $T$ starting at position $j$. We define $\rev{T} = T[n] \dots T[1]$. 

Given an integer array $\mathcal{A}$, 
the \textit{previous smaller value} array of $\mathcal{A}$ is an integer array of length $\vert \mathcal{A} \vert$ defined as $\PSV(\mathcal{A})[i] = \max(\{j \,|\, j < i, A[j] < A[i]\}\cup \{0\})$ for all $i\in [\vert \mathcal{A} \vert]$. The \textit{next smaller value} array of $\mathcal{A}$ is defined as $\NSV(\mathcal{A})[i] = \min(\{j \,|\, j > i,A[j] < A[i]\}\cup \{ \vert \mathcal{A} \vert  + 1 \})$ for all $i\in [\vert \mathcal{A} \vert]$.

We assume the reader to be familiar with the concepts of suffix array ($\SA$), Burrows-Wheeler Transform ($\BWT$), and Longest Common Prefix array ($\LCP$). For space reasons, we give the formal definitions in Appendix~\ref{sec:deferred-preliminaries}. By $\BWTr$, $\SAr$ and $\LCPr$ we mean $\BWT(\rev{T})$, $\SA(\rev{T})$, and $\LCP(\rev{T})$, respectively. Given $T \in \Sigma^n$ and $i \in [n]$, by $text(i) = n - \SAr[i] + 1$ we indicate the position in $T$ corresponding to the character $\BWTr[i]$. Moreover, given $x \in [n]$ we denote by $bwt(x) = \SAr^{-1}[n + 1 - x]$ the position in $\BWTr$ of the character $T[x]$. 
For any $bwt(x) = k$, we define 
$bwt(x)_{max} = k_{max} = k + 1$ if $k < \vert \BWTr \vert$ and $\LCPr[k] \leq \LCPr[k + 1]$ 
and $bwt(x)_{max} = k_{max} = k$, otherwise. 
For any $i \in [2, n]$ and $a, c \in \Sigma$, we say $i$ is a $c$-run break if $\BWTr[i- 1, i] = ac$ or $\BWTr[i - 1, i] = ca$ and $a \neq c$. We use $box(k)$ to denote the maximal interval $[l_k, r_k]$ such that $k \in [l_k, r_k]$ and $\LCPr[k] \leq \LCPr[h]$, for $h \in [l_k, r_k]$. We compute $box(k) = [l_k, r_k] = [\PSV[k] + 1, \NSV[k] - 1]$ in $O(1)$ time, where by $\PSV$ and $\NSV$ we mean $\PSV(\LCPr)$ and $\NSV(\LCPr)$, respectively.


We say that $T[x, y]$ ($x - 1 \leq y$) is a right-maximal substring of $T \in \Sigma^n$ if $T[x, y]$ is a suffix of $T$ or there exists $a, b \in \Sigma$ such that $a \neq b$ and both $T[x, y]\cdot a$ and $T[x, y]\cdot b$ occur in $T$. Also, for $c \in \{a, b\}$ we say that $T[x, y]\cdot c$ is a one-character right-extension. In particular, we say that $T[x, y]\cdot c$ is a $c$-right-extension. Note that the empty string $\varepsilon$ is also right-maximal.

\subsection{Suffixient sets}
We start this section by recalling the suffixient set definition.

\begin{definition}[Suffixient set \cite{CenzatoOP24,Depuydt24}]\label{def:suffixient}
A set $S \subseteq [n]$ is  \emph{suffixient} for a string $T$ if, for every one-character right-extension $T[i,j]$ ($j\geq i$) of every right-maximal string $T[i,j-1]$, there exists $x\in S$ such that  $T[i,j]$ is a suffix of $T[1,x]$.
\end{definition}


In \cite{CenzatoOP24}, Cenzato et al. studied methods to build suffixient sets of smallest cardinality. All of their algorithms that compute a smallest suffixient set rely on the notion of \textit{supermaximal extensions}.

\begin{definition}[Supermaximal extensions \cite{CenzatoOP24,TOCSpaper}]\label{def:supermaximal-extensions}
We say that $T[i,j]$ (with $j \ge i$) is a \emph{supermaximal extension} if $T[i,j-1]$ is right-maximal, and for each right-maximal $T[i',j'-1] \neq T[i,j-1]$ (with $i' \le j' \le n$), $T[i,j]$ is not a suffix of $T[i',j']$.
\end{definition}

As proven in \cite{CenzatoOP24,TOCSpaper}, the set described in the following definition is, indeed, a suffixient set of the smallest cardinality.

\begin{definition}[Suffixient set of smallest cardinality, \cite{CenzatoOP24,TOCSpaper}]\label{def:smallest-suffixient}
Let $<_t$ be any total order on $[n]$. We define a set $\SSS \subseteq [n]$ as follows: $x \in \SSS$ if and only if there exists a supermaximal extension $T[i,j]$ such that (i) $T[i,j]$ is a suffix of $T[1,x]$, and (ii) for all prefixes $T[1,y]$ suffixed by $T[i,j]$, if $y\neq x$ then $y <_t x$.
\end{definition}

\section{Testing suffixient sets}\label{sec:testing-suffixiency}
In the search for a method to solve Problem~\ref{problem1}, as a first step, by Definition~\ref{def:suffixient} it seems natural to find a strategy to capture all of the right-maximal substrings of $T$. In that sense, the following observation will be useful.

\begin{observation}\label{obs:run-break-right-ext}
For each $c$-right-extension $T[k, l]$ $(k \leq l)$ of every right-maximal string $T[k, l - 1]$, there exists at least one $c$-run break $i$ such that $T[k, l]$ is a suffix of $T[text(i') - \LCPr[i], text(i')]$, with $i' \in \{i - 1, i\}$ and $\BWTr[i'] = c$.
\end{observation}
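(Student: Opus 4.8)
The plan is to translate right-maximality and one-character right-extensions into the language of $\BWTr$ and $\LCPr$, and then to locate the promised $c$-run break \emph{inside} the $box$ of the right-maximal string. Throughout I identify each position $i$ with the occurrence of the context ending just before $text(i)$, whose one-character right-extension is the symbol $\BWTr[i]=T[text(i)]$; under this correspondence the occurrences of a fixed string $w$ in $T$ form a contiguous interval $B$ of $\SAr$ (a $box$), and $\BWTr$ restricted to $B$ lists exactly the characters that follow $w$ in $T$. My goal is to pick $w=T[k,l-1]$ and show that the sought run break sits strictly inside its box.

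First I would fix $w=T[k,l-1]$ and let $B=[l_i,r_i]=box(i)$ be its occurrence interval, so that every internal LCP value on $B$ is at least $|w|=l-k$. Because $T[k,l]$ is a one-character right-extension of the right-maximal string $T[k,l-1]$ (and $l\le n$ rules out the ``suffix of $T$'' case of right-maximality), the set $\{\BWTr[j]:j\in B\}$ contains the symbol $c=T[l]$ together with at least one symbol different from $c$. I would pick some $j_0\in B$ with $\BWTr[j_0]=c$ and scan towards a neighbouring position carrying a different symbol; since both a $c$ and a non-$c$ occur in $B$, this scan meets a boundary $i$ with $\{i-1,i\}\subseteq B$ at which $\BWTr[i-1]\ne\BWTr[i]$ and exactly one of the two equals $c$. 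This $i$ is the required $c$-run break, and I set $i'\in\{i-1,i\}$ to be the index with $\BWTr[i']=c$, so that $T[text(i')]=c=T[l]$ and $i'\in B$.

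The main obstacle is the quantitative step that makes the substring long enough: I must argue that the run break $i$ lies strictly inside $B$, i.e.\ $\{i-1,i\}\subseteq B$, so that $\LCPr[i]\ge l-k$. This is exactly where right-maximality is used --- it forces two distinct following characters to appear among the occurrences of $w$, hence forces the $c$/non-$c$ transition to be an internal boundary of the box rather than one of its endpoints (where $\LCPr$ would drop below $|w|$ and the argument would break). Granting $\LCPr[i]\ge l-k$, the positions $i-1,i\in B$ share with each other, and with the occurrence of $w$, a common suffix of length $\LCPr[i]\ge l-k$, so that $T[k,l-1]$ is a suffix of $T[text(i')-\LCPr[i],\,text(i')-1]$.

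Finally I would assemble the claim. The substring $T[text(i')-\LCPr[i],text(i')]$ has length $\LCPr[i]+1\ge (l-k)+1=|T[k,l]|$; its last character is $T[text(i')]=\BWTr[i']=c=T[l]$, and the $l-k$ characters preceding it equal $T[k,l-1]$ by the previous paragraph. Hence its suffix of length $l-k+1$ is exactly $T[k,l-1]\cdot c=T[k,l]$, which is the desired conclusion for this run break $i$ and this $i'$. The degenerate case $w=\varepsilon$ (so $k=l$ and $B=[1,n]$) is identical: any symbol $c$ occurring in $T$ yields a $c$-run break with $\LCPr[i]\ge 0$, and $T[text(i')-\LCPr[i],text(i')]$ is then a substring of length at least $1$ ending in $c=T[k,l]$.
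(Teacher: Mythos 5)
The paper states this Observation without any proof, so there is no official argument to compare against; your interval-plus-run-break argument is the natural way to fill that gap, and its skeleton is sound: the occurrences of $w=T[k,l-1]$ as a left context occupy a contiguous interval $B$ of the co-lexicographically sorted prefixes; all $\LCPr$ values at internal boundaries of $B$ are at least $\vert w\vert$; a $c$/non-$c$ transition strictly inside $B$ yields a $c$-run break $i$ with $i-1,i\in B$, hence $\LCPr[i]\geq l-k$, and since the context of $i'$ ends with $w$ and $\BWTr[i']=c$, the string $T[text(i')-\LCPr[i],text(i')]$ has $T[k,l]=w\cdot c$ as a suffix. The handling of $w=\varepsilon$ is also fine (calling $B$ a $box$ is a harmless abuse; you only use that internal $\LCPr$ values on $B$ are $\geq\vert w\vert$).

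The one flawed step is the parenthetical justifying the key counting claim that $\{\BWTr[j]:j\in B\}$ contains both $c$ and a symbol different from $c$. You write that ``$l\le n$ rules out the suffix-of-$T$ case of right-maximality,'' but $l\le n$ always holds, and the suffix clause concerns the \emph{string} $T[k,l-1]$: it can be a suffix of $T$ at a different occurrence even though the occurrence at $[k,l-1]$ is followed by $T[l]$. The distinction is not cosmetic: if one read ``$c$-right-extension'' broadly as any occurring $wc$ with $w$ right-maximal, the claim you need can genuinely fail --- in $T=\$abab$ the string $w=ab$ is right-maximal only via the suffix clause, its only occurrence followed by a character is followed by $a$, so the interval of $w$ carries a single $\BWTr$ symbol and no run-break string has $aba$ as a suffix (so even the Observation itself would fail under that reading). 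What actually rescues your proof is the paper's literal definition of a one-character right-extension: $T[k,l]$ being a $c$-right-extension presupposes a branching pair $a\neq b$ with both $wa$ and $wb$ occurring in $T$ and $c\in\{a,b\}$, and this --- not the vacuous inequality $l\le n$ --- is what guarantees a symbol different from $c$ inside $B$. Replace the parenthetical with an appeal to that definition and the proof is complete.
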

This observation induces an operative method to solve Problem~\ref{problem1}: for each run-break $i$, with $i' \in \{i - 1, i\}$ evaluate if there exists $x \in \SSS$ such that $T[text(i') - \LCPr[i], text(i')]$ suffixes $T[1, x]$.

In the following, 
in Lemma~\ref{lem:rm-suffixes-i_max} we show a method to determine if one string suffixes another in $O(1)$ time, so we can easily develop a quadratic algorithm solving Problem~\ref{problem1} (Section~\ref{sec:quadratic}), and in Section~\ref{sec:linear} we present the linear-time version of this method. Finally, in Section~\ref{sec:cardinality} we present a method to solve Problem~\ref{problem2} by generalizing the approach used in the previous sections.

\subsection{A simple quadratic algorithm for Suffixiency}\label{sec:quadratic}

From the following lemma, we obtain a method to evaluate if $T[1, x]$ is suffixed by a right-extension $T[y, z]$ in $O(1)$ time. 

\begin{lemma}\label{lem:rm-suffixes-i_max}
Given $x \in [n]$, let $k = bwt(x)$, and let $j$ be a $c$-run break, such that $j' \in \{j - 1, j\}$ and $\BWTr[j'] = c$. We have, $P = T[text(j') - \LCPr[j], text(j')]$ suffixes $T[1, x]$ if and only if $\BWTr[k] = c$ and $\PSV[j] < k_{\max} < \NSV[j]$.
\end{lemma}
\begin{proof}
\noindent($\Rightarrow$) Let $Q = T[x - \LCPr[k_{\max}], x]$ and assume that $P$ suffixes $T[1, x]$. By definition of $\SAr$, $\LCPr[k_{max}]$ is the length of the largest suffix that $T[1, x]$ shares with any other substring of $T$, then $P$ also suffixes $Q$. In particular, $P' = P[1, \vert P \vert - 1]$ suffixes $Q' = Q[1, \vert Q \vert - 1]$, so the area in $\SAr$ corresponding to $\rev{(Q')}$ is included in the area in $\SAr$ of $\rev{(P')}$, which implies $\PSV[j] < k_{\max} < \NSV[j]$. In addition, since $P$ suffixes $T[1, x]$, we have $T[text(j')] = T[x] = \BWTr[j'] = \BWTr[k] = c$.

\noindent($\Leftarrow$) We proceed by the contrapositive. Assume $P$ does not suffixes $T[1, x]$. Then, we have $T[text(j')] = \BWTr[j'] \neq \BWTr[k] = T[x]$ or $k_{\max} \not\in box(j)$, which implies $k_{\max} \leq \PSV[j]$ or $\PSV[j] \leq k_{\max}$. \qed
\end{proof}


From Observation~\ref{obs:run-break-right-ext} and Lemma~\ref{lem:rm-suffixes-i_max} we immediately obtain a simple algorithm to solve Problem~\ref{problem1}: for each $c$-run break $i$ in $\BWTr$ evaluate if there exists $x \in \SSS$ such that $bwt(x)_{\max} \in box(i)$ and $\BWTr[bwt(x)] = c$. Since there are $O(n)$ run-breaks in $\BWTr$, and in the worst case we have to scan the whole set $\SSS$ to find such an $x$, and $\vert \SSS \vert = O(n)$, this simple algorithm runs in $O(n^2)$ time.

\subsection{A linear algorithm for Suffixiency}\label{sec:linear}

In this section, we propose a method to evaluate each position $x$ in $\SSS$ only while $T[1, x]$ is suffixed by the strings induced by the run-breaks we process, ultimately leading to a linear-time algorithm.

In a first step, we take advantage of the following.
\begin{lemma}\label{lem:not-in-box}
Let $i \in [n]$ and let $i \neq j \in [n]$ such that $j \not\in box(i)$. If $j < i$ then for any $k < j$, we have $k \not\in box(i)$. On the other hand, if $i < j$, then for any $j < g$, it holds $g \not \in box(i)$. 
\end{lemma}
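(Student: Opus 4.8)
The plan is to reduce the statement to the single structural fact that $box(i)$ is a \emph{contiguous interval containing $i$}. By its very definition $box(i)$ is the maximal interval $[j,k]$ with $i \in [j,k]$ (and it is realised explicitly as $box(i) = [\PSV[i]+1, \NSV[i]-1]$); writing $box(i) = [l_i, r_i]$, this gives $l_i \le i \le r_i$. Once this is recorded, the lemma is merely the elementary monotonicity of membership in an interval: a position lying strictly to the left of $l_i$ stays to the left as it decreases, and a position lying strictly to the right of $r_i$ stays to the right as it increases.

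Concretely, I would argue the two symmetric cases separately. Suppose first $j < i$. Since $j \notin box(i) = [l_i, r_i]$ while $j < i \le r_i$, the position $j$ cannot exceed $r_i$, so the only way for it to lie outside $[l_i, r_i]$ is $j < l_i$. Then every $k < j$ satisfies $k < j < l_i$, whence $k \notin [l_i, r_i] = box(i)$. The case $i < j$ is identical on the opposite end: from $l_i \le i < j$ and $j \notin [l_i, r_i]$ we get $j > r_i$, so every $g > j$ satisfies $g > r_i$ and thus $g \notin box(i)$.

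I expect no genuine obstacle here: the whole content is the observation that $box(i)$ is an interval containing $i$, which is immediate from the $\PSV$/$\NSV$ characterisation. The only point requiring (minimal) care is the bookkeeping of endpoints, namely pairing the hypothesis $j < i$ with the left endpoint $l_i$ and the hypothesis $i < j$ with the right endpoint $r_i$, and using $l_i \le i \le r_i$ to turn ``$j$ is on one side of $i$'' into ``$j$ is on the corresponding side of the interval.''
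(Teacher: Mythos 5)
Your proof is correct and takes essentially the same approach as the paper's: the paper also reduces the claim to the characterisation $box(i) = [\PSV[i]+1, \NSV[i]-1]$, noting that $j \notin box(i)$ forces $j \leq \PSV[i]$ (when $j < i$) or $\NSV[i] \leq j$ (when $i < j$), and then applies the same endpoint monotonicity. Your rephrasing in terms of $l_i \leq i \leq r_i$ is just a cleaner notation for the identical argument (and, incidentally, avoids a small typo in the paper's own proof, which writes $k < j \leq \NSV[i]$ where $\PSV[i]$ is meant).
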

\begin{proof}
Since $j \not\in box(i)$, we have $j \leq \PSV[i]$ or $\NSV[i] \leq j$. In the former case, we have $j < i$, so $ k < j \leq \PSV[i]$ holds for any $k < j$, then $k \not\in box(i)$.
In the second case, we have $i < j$, thus $\NSV[i] < j < g$ holds for any $j < g$, so $g \not\in box(i)$.\qed
\end{proof}

Suppose that for some $c$-run break $i$ (with $i' \in \{i - 1, i\}$ and $\BWTr[i'] = c$) we want to determine if there exists $x \in \SSS$ such that $P = T[text(i') - \LCPr[i], text(i')]$ suffixes $T[1, x]$. Lemma~\ref{lem:not-in-box} tells us that if $bwt(x)_{\max} \not\in box(i)$, we have two cases: (i) if $bwt(x)_{\max} < i$, then for any $y \in \SSS$ such that $bwt(y)_{\max} < bwt(x)_{\max}$ we have $P$ also does not suffix $T[1, y]$; otherwise, (ii) for any $z \in \SSS$ such that $bwt(x)_{\max} < bwt(z)_{\max}$ we have $P$ also does not suffix $T[1, z]$.

Following the above idea, we propose the first linear-time algorithm to solve Problem \ref{problem1}. This procedure is summarized in Algorithm~\ref{alg:linear} in Appendix~\ref{sec:algs}.

The algorithm maintains three sets during the computation: a set $\mathcal{A} = \{bwt(x) \mid x \in \SSS\}$ (i.e., the elements of $\SSS$ replaced by their position in $\BWTr$), a set $\mathcal{B} = sort(\mathcal{A})$, and a set $\mathcal{C}$ such that $\mathcal{C}[c][k] = h_{\max}$, where $h$ is the $k$-th entry in $\mathcal{B}$ with $\BWTr[h] = c$. Now, for a given $c$-run break $i$, suppose that we scan $\mathcal{C}[c]$ from $1$ to $a \in [n]$ until we find for the first entry such that $\mathcal{C}[c][a] \in box(i)$, then for the next $c$-run break $j$ (with $j' \in \{j - 1, j\}$ and $\BWT[j'] = c$) there is no need to scan positions $1, \dots, a - 1$ of $\mathcal{C}[c]$, since we know that $T[text(j') - \LCPr[j], text(j')]$ cannot suffix any of the prefixes of $T$ induced by those positions. Using this observation, we keep, for each $c \in \Sigma$, a pointer in $\mathcal{P}[c]$, pointing to the last visited element in $\mathcal{C}[c]$. Then, each time we find a $c$-run break $i$, we start comparing from the last pointed position of $\mathcal{C}[c]$. If we reach a position $h > i$, and $h_{\max} \not\in box(i)$, then we know that $\SSS$ cannot be suffixient. 

Algorithm~\ref{alg:linear} runs in linear time and space. Arrays $\BWTr$, $\SA$, $\LCPr$, $\PSV$, and $\NSV$ consume $O(n)$ space. Arrays $\mathcal{A}$, $\mathcal{B}$, and $\mathcal{C}$ use $\vert \mathcal{S} \vert = O(n)$, and the array 
$\mathcal{P}$ uses $O(\sigma)$ space. So, the total algorithm space consumption is $O(n)$. Computing arrays on lines~\ref{line:linear_computing-arrays} and \ref{line:linear_computing-psv-nsv} takes $O(n)$ time \cite{BerkmanSV93,ManberM93}. In addition, computing the sets $\mathcal{A}$ (by computing $\SAr^{-1}[n - s + 1]$), 
$\mathcal{B}$ (by radix-sorting $\mathcal{A}$), and $\mathcal{C}$ (by traversing $\mathcal{B}$ and appending $i_{\max}$ to the end of $\mathcal{C}[\BWTr(i)]$ in $O(1)$ for each $i \in \mathcal{B}$
) on line~\ref{line:linear_computing-sets}  also takes $O(n)$ time. Finally, the rest of the algorithm is a scan of $\BWTr$, which can be done in $O(n)$ time, and a while loop inside that scan of $\BWTr$ which sequentially traverses in total at most the $\vert \SSS \vert$ positions in all the list in $\mathcal{C}$, so Algorithm~\ref{alg:linear} runs in linear-time. 
Correctness follows from Lemmas~\ref{lem:rm-suffixes-i_max} and \ref{lem:not-in-box}.

In other words, we have proven the following.

\begin{lemma}
Given $T\in \Sigma^n$ and $\SSS \subseteq [n]$, Algorithm~\ref{alg:linear} solves Problem~\ref{problem1} in $O(n)$ time and $O(n)$ words of space.
\end{lemma}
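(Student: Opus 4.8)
The plan is to establish the two claims separately: correctness (Algorithm~\ref{alg:linear} outputs ``yes'' iff $\SSS$ is suffixient) and the $O(n)$ time and space bounds. The complexity analysis is essentially spelled out in the surrounding text, so I regard correctness as the substantive part.

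For \textbf{correctness}, I would proceed as follows. By Definition~\ref{def:suffixient}, $\SSS$ is suffixient iff every one-character right-extension $T[k,l]$ of every right-maximal $T[k,l-1]$ is a suffix of $T[1,x]$ for some $x\in\SSS$. First I would invoke Observation~\ref{obs:run-break-right-ext} to reduce this universally-quantified condition over all right-extensions to a condition over the $c$-run breaks of $\BWTr$: it suffices to check, for each $c$-run break $i$ (with $i'\in\{i-1,i\}$, $\BWTr[i']=c$), whether the string $P=T[text(i')-\LCPr[i],text(i')]$ suffixes some $T[1,x]$, $x\in\SSS$. I would argue both directions: if $\SSS$ is suffixient then certainly each such $P$ (being a one-character right-extension) is covered; conversely, if every run-break string $P$ is covered, then every right-extension $T[k,l]$, being a suffix of some such $P$ by Observation~\ref{obs:run-break-right-ext}, is also covered, since ``is a suffix of'' is transitive. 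Next I would apply Lemma~\ref{lem:rm-suffixes-slpr} to convert the ``$P$ suffixes $T[1,x]$'' test into the $O(1)$ index condition $\BWTr[bwt(x)]=c$ and $\SLPR[bwt(x)]\in box(i)$, i.e.\ $\PSV[i]<\SLPR[bwt(x)]<\NSV[i]$. Thus $\SSS$ is suffixient iff for every $c$-run break $i$ there is some $x\in\SSS$ with $bwt(x)$ mapped into $\mathcal C[c]$ and $\SLPR[bwt(x)]\in box(i)$.

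It then remains to show the pointer-advancing scheme actually finds such an $x$ whenever it exists, and never skips a needed one. Here the key is Lemma~\ref{lem:not-in-box}: since the positions in $\mathcal C[c]$ are sorted (by construction from $\mathcal B=\mathrm{sort}(\mathcal A)$) and the run breaks of a fixed character $c$ are processed in increasing order of $i$, the monotonicity guaranteed by Lemma~\ref{lem:not-in-box} ensures that once a position $h=\mathcal C[c][p]$ has $\SLPR[h]<i$ and $\SLPR[h]\notin box(i)$, it can be permanently discarded: for every later ($c$-run break) $j>i$ and every earlier list entry $\mathcal C[c][p']$ with $p'\le p$, the associated $\SLPR$ value stays outside $box(j)$. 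This justifies keeping a single monotone pointer $\mathcal P[c]$ into each list and never rewinding it. Conversely, if for some run break $i$ the scan advances past all remaining entries of $\mathcal C[c]$ without finding one inside $box(i)$, Lemma~\ref{lem:not-in-box} guarantees none of the still-unvisited entries (which have larger $\SLPR$) can land in $box(j)$ for any later $j$ either, so the string $P$ is genuinely uncovered and $\SSS$ is correctly declared not suffixient.

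For the \textbf{complexity}, I would simply aggregate the costs already described: computing $\LCPr$, $\BWTr$, $\PSV$, $\NSV$, and $\SLPR$ (via Algorithm~\ref{alg:slpr}) is $O(n)$; building $\mathcal A$ uses $O(1)$-time $\SAr^{-1}$ lookups per element, $\mathcal B$ is a radix sort of $O(n)$ integers, and $\mathcal C$ is one pass appending into per-character linked lists; and the main loop is a single scan of $\BWTr$ in which the inner while-loop advances monotone pointers whose total travel is bounded by $\sum_{c}|\mathcal C[c]|=|\SSS|=O(n)$. Hence the whole algorithm is $O(n)$ time and $O(n)$ words of space. The step I expect to be the main obstacle is the pointer-monotonicity argument: making fully precise that the combination of sorted lists and increasing run-break order lets each pointer move only forward, and in particular that discarded entries can never be needed by a later run break of the same character. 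This is exactly where Lemma~\ref{lem:not-in-box} must be invoked carefully, distinguishing the $\SLPR[bwt(x)]<i$ and $\SLPR[bwt(x)]>i$ cases, and it is the crux that turns the quadratic scan of Section~\ref{sec:quadratic} into a linear one.
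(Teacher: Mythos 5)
Your proposal follows the paper's proof essentially verbatim in structure: the paper's own argument consists of exactly your complexity aggregation plus the one-line claim that correctness follows from Lemmas~\ref{lem:rm-suffixes-slpr} and~\ref{lem:not-in-box}, and your reduction (Observation~\ref{obs:run-break-right-ext} to pass from all right-extensions to run breaks, transitivity of the suffix relation, then Lemma~\ref{lem:rm-suffixes-slpr} for the $O(1)$ test) is the intended one, spelled out in more detail than the paper gives.

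One concrete flag, on precisely the step you identify as the crux. Your discarding claim --- that once $\SLPR[h]\notin box(i)$ with $\SLPR[h]<i$, the value ``stays outside $box(j)$'' for every later $c$-run break $j>i$ --- is literally false: boxes form a laminar family, and a later run break $j$ with smaller $\LCPr$ value can have $box(j)\supsetneq box(i)$, re-capturing skipped values $v\le\PSV[i]$. Lemma~\ref{lem:not-in-box} only says each single box is an interval; it gives no monotonicity across distinct run breaks. The discard is nonetheless safe for a different reason: if $v\le\PSV[i]$ and $v\in box(j)$ with $j>i$, then $box(j)$, being an interval containing both $v$ and $j$, must contain $[\PSV[i],\NSV[i]]\supseteq box(i)$ (one checks $j\ge\NSV[i]$, since otherwise $\LCPr[\PSV[i]]\ge\LCPr[j]\ge\LCPr[i]$, a contradiction), hence also the entry at which the pointer stopped at step $i$; so a witness is never lost, an invariant proved by induction over the run breaks. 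Relatedly, soundness of returning $false$ when the current entry is $\ge\NSV[i]$ needs the values in each list $\mathcal{C}[c]$ to be nondecreasing; you attribute this to ``construction from $\mathcal{B}=sort(\mathcal{A})$,'' but $\mathcal{B}$ sorts $\BWTr$ positions, not the stored $\SLPR$ values, so monotonicity of $\SLPR$ over same-character positions needs its own short argument. To be fair, the paper's prose glosses over both points with the same brevity, so your proposal is at the paper's level of rigor; these are simply the two sub-claims a careful referee would press on.
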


\subsection{Testing Minimum Suffixiency}\label{sec:cardinality}

Let $\SSS \subseteq [n]$ be a suffixient set for $T \in \Sigma^n$. By Definition~\ref{def:smallest-suffixient}, have that $\SSS$ is of the smallest cardinality for $T$ if for each $x, y \in \SSS$ such that $x \neq y$ it holds $T[x - \LCPr[bwt(x)_{\max}], x]$ does not suffix $T[y - \LCPr[bwt(y)_{\max}], y]$, and vice versa. From the above analysis, we get the following. 

\begin{proposition}\label{lem:def-smallest-suffixient}
A suffixient set $\SSS$ is a suffixient set of the smallest cardinality for $T$ if for any $x, y \in \SSS$ such that $x \neq y$, it holds $\BWTr[bwt(x)] \neq \BWTr[bwt(y)]$ or $box(bwt(x)_{\max}) \cap box(bwt(y)_{\max}) = \emptyset$.
\end{proposition}
\begin{proof}
Let $k = bwt(x)$ and $h = bwt(y)$. By Lemma~\ref{lem:rm-suffixes-i_max} we have that $P = T[x - \LCPr[k_{\max}], x]$ suffixes $Q = T[y - \LCPr[h_{\max}], y]$ if and only if $\BWTr[k] = \BWTr[h]$ and $\PSV[k] < h_{\max} < \NSV[k]$. 
Similarly, we have that $Q$ suffixes $P$ if and only if $\BWTr[k] = \BWTr[h]$ and $\PSV[h] < k_{\max} < \NSV[h]$.
Then, if neither $P$ suffixes $Q$ nor $Q$ suffixes $P$ we have $\BWTr[k] \neq \BWTr[h]$ or $h_{\max} \not\in box(k)$ and $k_{\max} \not \in box(h)$, from which we conclude $box(k_{\max}) \cap box(h_{\max}) = \emptyset$.\qed
\end{proof}


We are now ready to show a linear-time algorithm to solve Problem \ref{problem2}. This procedure is summarized in Algorithm~\ref{alg:cardinality} in Appendix~\ref{sec:algs}.

The algorithm uses the same set $\mathcal{C}$ as the one used in Algorithm~\ref{alg:linear}
, and if for some $c \in \Sigma$ and $j \in [2, \vert \mathcal{C}[c] \vert]$ it holds $\PSV[\mathcal{C}[c][j]] < \NSV[\mathcal{C}[c][j - 1]]$ (so, their boxes overlap), we know that $\SSS$ cannot be a suffixient set of the smallest cardinality. 
With respect to the running time, it takes $O(n)$ time to compute the same arrays as those used in Algorithm~\ref{alg:linear}. 
In addition, we sequentially scan the array $\mathcal{C}$, which has exactly $\vert \SSS \vert$ entries, so it takes $O(n)$ time. 
Correctness follows from Lemma~\ref{lem:rm-suffixes-i_max} and Proposition \ref{lem:def-smallest-suffixient}.

In other words, we have proven the following.

\begin{lemma}
Given $T\in \Sigma^n$ and $\SSS \subseteq [n]$, Algorithm~\ref{alg:cardinality} solves Problem~\ref{problem2} in $O(n)$ time and $O(n)$ words of space.
\end{lemma}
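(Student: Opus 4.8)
The plan is to prove the final Lemma asserting that Algorithm~\ref{alg:cardinality} solves Problem~\ref{problem2} in $O(n)$ time and $O(n)$ words of space. Since Problem~\ref{problem2} asks whether $\SSS$ is a suffixient set of \emph{minimum} cardinality, the statement implicitly assumes that $\SSS$ has already been verified to be suffixient (via Algorithm~\ref{alg:linear}); I would first make this dependency explicit, noting that minimality is tested only after sufficiency has been established, so the full procedure runs Algorithm~\ref{alg:linear} followed by the overlap check of Algorithm~\ref{alg:cardinality}. The correctness argument then reduces to showing that the overlap check faithfully implements the criterion of Lemma~\ref{lem:def-smallest-suffixient}.

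For correctness, the key step is to connect the pairwise condition of Lemma~\ref{lem:def-smallest-suffixient} to the adjacency check performed by the algorithm. Lemma~\ref{lem:def-smallest-suffixient} states that $\SSS$ is minimum iff for all distinct $x, y \in \SSS$ we have $\BWTr[bwt(x)] \neq \BWTr[bwt(y)]$ or $box(\SLPR[bwt(x)]) \cap box(\SLPR[bwt(y)]) = \emptyset$. The algorithm groups the $\SLPR$-values by their $\BWTr$-character into the lists $\mathcal{C}[c]$ and, because $\mathcal{B} = sort(\mathcal{A})$, each $\mathcal{C}[c]$ is sorted in increasing order of $\SLPR$-position. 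I would therefore establish that, within a fixed character class $c$, it suffices to check overlap only between \emph{consecutive} entries $\mathcal{C}[c][j-1]$ and $\mathcal{C}[c][j]$, rather than all $\binom{|\mathcal{C}[c]|}{2}$ pairs. The enabling observation is a monotonicity/nesting property of the $box$ intervals when their defining positions are sorted: if $p \le q \le s$ are three positions and $box(p) \cap box(s) \neq \emptyset$ (they contain a common index or straddle each other), then the intermediate interval $box(q)$ must also intersect at least one of them, so a violation among any pair in the sorted list forces a violation between an adjacent pair. Concretely, I would argue using the structure $box(i) = [\PSV[i]+1, \NSV[i]-1]$ that two boxes intersect iff $\PSV[\text{later}] < \NSV[\text{earlier}]$, which is exactly the test $\PSV[\mathcal{C}[c][j]] < \NSV[\mathcal{C}[c][j-1]]$ in the algorithm, and that non-intersection of all adjacent pairs propagates to non-intersection of all pairs by transitivity of the separation $\NSV[\cdot] \le \PSV[\cdot]$ along the sorted chain.

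The main obstacle I anticipate is precisely this reduction from all-pairs to adjacent-pairs checking, i.e.\ proving that pairwise box-disjointness within a character class is equivalent to adjacent box-disjointness in the sorted order. This requires a careful argument that $box$ intervals sharing the same $\LCPr$-minimum structure behave like a laminar (nested or disjoint) family, or at least that their left and right endpoints are simultaneously monotone enough that a ``gap'' between two consecutive sorted boxes guarantees gaps with all farther boxes on that side. I would handle this by invoking the definition of $box$ through $\PSV$ and $\NSV$ and showing that if $box(\mathcal{C}[c][j-1])$ and $box(\mathcal{C}[c][j])$ are disjoint with the former entirely to the left, then $\NSV[\mathcal{C}[c][j-1]] \le \PSV[\mathcal{C}[c][j]] < \SLPR\text{-position of } j$, which bounds every later box away from every earlier one.

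For the complexity claims, the argument is routine and I would treat it briefly. Building $\BWTr$, $\LCPr$, $\SAr^{-1}$, $\PSV$, $\NSV$, and $\SLPR$ all takes $O(n)$ time and $O(n)$ words, as already established for Algorithm~\ref{alg:linear}; constructing $\mathcal{A}$, $\mathcal{B}$ via radix sort, and the lists $\mathcal{C}$ likewise costs $O(n)$. The minimality check is a single pass over all entries of $\mathcal{C}$, and since the lists partition the $|\SSS| \le n$ elements, the total work is $\sum_{c \in \Sigma} |\mathcal{C}[c]| = |\SSS| = O(n)$, with each adjacent comparison costing $O(1)$ via the precomputed $\PSV$ and $\NSV$ arrays. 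Combining the correctness reduction with these bounds yields the stated $O(n)$ time and $O(n)$ space, completing the proof.
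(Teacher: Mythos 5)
Your high-level route coincides with the paper's: reduce correctness of Algorithm~\ref{alg:cardinality} to the box-disjointness criterion of Lemma~\ref{lem:def-smallest-suffixient} (which in turn rests on Lemma~\ref{lem:rm-suffixes-slpr}), and bound the cost by the $O(n)$ preprocessing shared with Algorithm~\ref{alg:linear} plus a single pass over the $\vert \SSS \vert$ entries of $\mathcal{C}$. Your complexity paragraph matches the paper almost verbatim, and your remark that minimality is only tested for an input already known to be suffixient agrees with the algorithm's input specification. Note that the paper is in fact terser than you are: it simply asserts that correctness ``follows from Lemmas~\ref{lem:rm-suffixes-slpr} and~\ref{lem:def-smallest-suffixient}'' and never argues why checking only \emph{adjacent} entries of each list $\mathcal{C}[c]$ certifies the all-pairs condition, so your instinct to isolate that reduction as the crux of the proof is the right one.

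However, your justification of that reduction rests on a premise that is not established and does not follow from what you cite: the claim that ``because $\mathcal{B} = sort(\mathcal{A})$, each $\mathcal{C}[c]$ is sorted in increasing order of $\SLPR$-position.'' Sorting $\mathcal{B}$ orders the entries of $\mathcal{C}[c]$ by the $\BWTr$-positions $i_1 < i_2 < \cdots$ of the elements of $\SSS$ with character $c$, \emph{not} by their $\SLPR$ values, and $\SLPR$ need not be monotone over equal-character positions: precisely in the overlap case, where two boxes are nested (any two $\PSV$/$\NSV$ boxes form a laminar family, so intersection implies nesting), the two $\SLPR$ values may appear in either order. Relatedly, your test ``two boxes intersect iff $\PSV[\text{later}] < \NSV[\text{earlier}]$'' is only one half of the interval-intersection condition and is equivalent to overlap only after a left-to-right ordering of the boxes has been secured. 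The ingredient that repairs your argument is an \emph{anchoring} property rather than sortedness: since all $\LCPr$ values between the argmin position and $i$ (resp.\ $i+1$) are at least the minimum, one has $i \in box(\SBPR[i])$ and $i+1 \in box(\SAPR[i])$, hence $box(\SLPR[i])$ always contains $i$ or $i+1$. With this anchor, two \emph{disjoint} boxes attached to positions $i < j$ must lie in left-to-right order, so adjacent disjointness makes the algorithm's one-sided $\PSV$/$\NSV$ test exact and propagates along the chain $i_1 < i_2 < \cdots$ exactly as in your transitivity argument; conversely, an intersecting non-adjacent pair forces, by laminarity plus anchoring, an intersecting adjacent pair (your three-position argument goes through once $p \le q \le s$ are taken to be the anchor positions $i_a < i_m < i_b$ rather than the $\SLPR$ values). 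Your laminarity and transitivity ideas are therefore sound, but as written the pivotal step of your plan is unsupported and needs this replacement.
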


Since Algorithm~\ref{alg:cardinality} uses the same data structures as Algorithm~\ref{alg:linear} and it consists of a single scan of $\mathcal{C}$, we can add it as an additional step of Algorithm~\ref{alg:linear} without changing the time bounds of both algorithms. We describe them separately for ease of explanation. In Algorithm~\ref{alg:suffixiency-cardinality}, we show both tests working on a single algorithm.

\begin{algorithm}[ht]
\caption{Determines if $\SSS$ is a smallest suffixient set for a string $T\in\Sigma^n$}\label{alg:suffixiency-cardinality}
\Input{A text $T[1,n] \in \Sigma^n$ and $\SSS \subseteq [n]$}
\Output{$true$ if $\SSS$ is suffixient for $T$, $false$ otherwise}
{$\BWTr \gets \BWT(\rev{T})$;
$\LCPr \gets \LCP(\rev{T})$;
$\SAr^{-1} \gets \SA^{-1}(\rev{T})$\;}
{$\PSV \gets \PSV(\LCPr)$;
$\NSV \gets \NSV(\LCPr)$\;}
{$\mathcal{A} \gets \{bwt(s) = \SAr^{-1}[n - s + 1] \mid s \in \SSS\}$;
$\mathcal{B}[1, \vert A \vert] \gets sort(\mathcal{A})$\;}
{$\mathcal{C}[1, \sigma] \gets ([~], \dots , [~])$;
$\mathcal{P}[1, \sigma] \gets (1, \dots, 1)$\;}
\lFor{$k = 1, \dots, \vert\mathcal{B}\vert$}{$\mathcal{C}[\BWTr[\mathcal{B}[k]]].append(\mathcal{B}[k]_{\max})$}
\For(\tcp*[f]{Testing suffixiency})
{$i = 2, \dots, n$}{
\If{$\BWTr[i-1] \neq \BWTr[i]$}{
\For{$i' \in \{i - 1, i \}$}{
$c \gets \BWTr[i']$\;
\lIf{$\vert \mathcal{C}[c] \vert = 0$}{\textbf{return} $false$}
\While{$\mathcal{C}[c][\mathcal{P}[c]] \leq \PSV[i]$}{
\lIf{$\vert \mathcal{C}[c] \vert = \mathcal{P}[c]$}{\textbf{return} $false$}
$\mathcal{P}[c] \gets \mathcal{P}[c] + 1$\;
}
\lIf{$\NSV[i] \leq \mathcal{C}[c][\mathcal{P}[c]]$}{\textbf{return} $false$}
}}}
\For(\tcp*[f]{Testing minimality}){$i = 1, \dots, \sigma$}{
\For{$j = 2, \dots, \vert \mathcal{C}[i] \vert$}{
\lIf{$\PSV[\mathcal{C}[i][j]] < \NSV[\mathcal{C}[i][j - 1]]$}{\textbf{return} $false$}
}}
\Return $true$\;
\end{algorithm}

Additionally, in Figure~\ref{fig:test-example} of Appendix~\ref{sec:example}, we show an example of how Algorithms~\ref{alg:linear} and \ref{alg:cardinality} work.

\bibliographystyle{splncs04}
\bibliography{biblio}

\appendix 

\section{Example of data structures used to test Suffixiency and Minimum Suffixiency}\label{sec:example}
In Figure~\ref{fig:test-example} we show an example of the data structures used to test Suffixiency and Minimum Suffixiency. In the caption of the image it is described how the algorithms works.

\begin{figure}[!htb]
\centering
\includegraphics[scale=0.7]{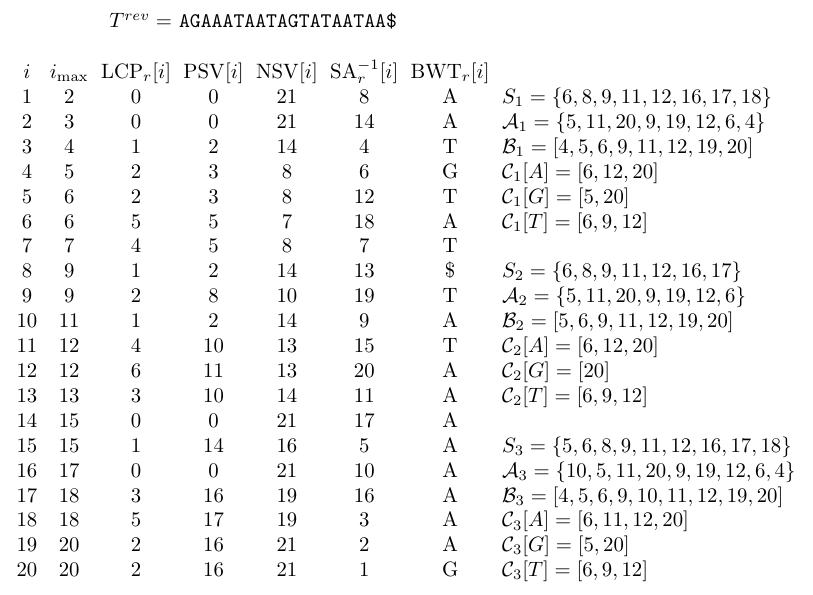}
\caption{
The figure shows the data used by Algorithm~\ref{alg:linear} 
to solve problems \ref{problem1} and \ref{problem2}. There are three sets on the right of the figure: $S_1$ (which is suffixient of smallest cardinality), $S_2$ (which is not suffixient), and $S_3$ (which is suffixient but not of smallest cardinality). They are shown with the respective sets $\mathcal{A}_j$, $\mathcal{B}_j$, and $\mathcal{C}_j$ ($j \in [1, 3]$) used in the algorithms. For brevity, we show how (i) Algorithm~\ref{alg:linear} works on the $G$-run breaks over the sets $S_1$ and $S_2$, and how (ii) Algorithm~\ref{alg:cardinality} works over the sets $S_1$ and $S_3$, and the lists $ \mathcal{C}_1[A]$ and $ \mathcal{C}_3[A]$. (i) 
For the first $G$-run break at position $4$, we have $\PSV[4] = 3 < \mathcal{C}_1[G][P[G] = 1] = 5 < \NSV[4] = 8$. For the next $G$-run break at position $5$, we have $\PSV[5] = 3 < \mathcal{C}_1[G][P[G] = 1] = 5 < \NSV[5] = 8$. The next $G$-run break occurs at position $20$. Since we have $\mathcal{C}_1[G][P[G] = 1] = 5 < \PSV[20] = 16$, we increment $\mathcal{P}[G] = \mathcal{P}[G] + 1 = 2$. Now, we have $\PSV[20] = 16 < \mathcal{C}_1[G][P[G] = 2] = 20 < \NSV[20] = 21$, and we conclude that every $G$-run break $box$ is covered by some element in $\mathcal{C}_1[G]$. In the case of set $S_2$, for the first $G$-run break at position $4$ we have $\PSV[4], \NSV[4] < \mathcal{C}_2[G][P[G] = 1] = 20$, so the algorithm rejects $S_2$ as a suffixient set. (ii) For $\mathcal{C}_1[A]$, we have $\NSV[\mathcal{C}_1[A][1] = 6] = 7 < 11 = \PSV[\mathcal{C}_1[A][2] = 12]$, $\NSV[\mathcal{C}_1[A][2] = 12] = 13 < 16 = \PSV[\mathcal{C}_1[A][3] = 20]$, so we conclude that none of the $G$-right-extension included in $\mathcal{S}_1$ suffixes another of them. In the case of set $S_3$, we have $\PSV[\mathcal{C}_3[A][3] = 12] = 11 < 13 = \NSV[\mathcal{C}[A][2] = 11]$, so the two boxes overlap and Algorithm~\ref{alg:cardinality} rejects the set $S_3$ as a suffixient set of smallest cardinality.
}\label{fig:test-example}
\end{figure}

\section{Algorithms testing Suffixiency and Minimum Suffixiency}\label{sec:algs}
In this section we show the algorithms described in Section~\ref{sec:testing-suffixiency}. In Algorithm~\ref{alg:linear} we show the algorithm solving Problem~\ref{problem1} described in Section~\ref{sec:linear}. In addition, in Algorithm~\ref{alg:cardinality} we show the algorithm solving Problem~\ref{problem2} described in Section~\ref{sec:cardinality}. 

\begin{algorithm}[tb]
\caption{Determines if $\SSS$ is a suffixient set for a string $T\in\Sigma^n$}\label{alg:linear}
\Input{A text $T[1,n] \in \Sigma^n$ and $\SSS \subseteq [n]$}
\Output{$true$ if $\SSS$ is suffixient for $T$, $false$ otherwise}
{$\BWTr \gets \BWT(\rev{T})$;
$\LCPr \gets \LCP(\rev{T})$;
$\SAr^{-1} \gets \SA^{-1}(\rev{T})$\;}\label{line:linear_computing-arrays}
{$\PSV \gets \PSV(\LCPr)$;
$\NSV \gets \NSV(\LCPr)$\;}
\label{line:linear_computing-psv-nsv}
{$\mathcal{A} \gets \{bwt(s) = \SAr^{-1}[n - s + 1] \mid s \in \SSS\}$;
$\mathcal{B}[1, \vert A \vert] \gets sort(\mathcal{A})$\;}
{$\mathcal{C}[1, \sigma] \gets ([~], \dots , [~])$;
$\mathcal{P}[1, \sigma] \gets (1, \dots, 1)$\;}\label{line:linear_computing-sets}
\lFor{$k = 1, \dots, \vert\mathcal{B}\vert$}{$\mathcal{C}[\BWTr[\mathcal{B}[k]]].append(\mathcal{B}[k]_{\max})$}
\For{$i = 2, \dots, n$}{
\If{$\BWTr[i-1] \neq \BWTr[i]$}{
\For{$i' \in \{i - 1, i \}$}{
$c \gets \BWTr[i']$\;
\lIf{$\vert \mathcal{C}[c] \vert = 0$}{\textbf{return} $false$}
\While{$\mathcal{C}[c][\mathcal{P}[c]] \leq \PSV[i]$}{
\lIf{$\vert \mathcal{C}[c] \vert = \mathcal{P}[c]$}{\textbf{return} $false$}
$\mathcal{P}[c] \gets \mathcal{P}[c] + 1$\;
}
\lIf{$\NSV[i] \leq \mathcal{C}[c][\mathcal{P}[c]]$}{\textbf{return} $false$}
}}}
\Return $true$\;
\end{algorithm}

\begin{algorithm}[!htb]
\caption{Determines if a suffixient set $\SSS$ is of the smallest cardinality}\label{alg:cardinality}
\Input{$T\in\Sigma^n$, $\PSV = \PSV(\LCPr)$, $\NSV = \NSV(\LCPr)$, a suffixient set $\SSS \subseteq [n]$, and the set $\mathcal{C}$ described in Section~\ref{sec:linear}}
\Output{$true$ if $\SSS$ is a suffixient set of smallest cardinality, $false$ otherwise}
\For{$i = 1, \dots, \sigma$}{
\For{$j = 2, \dots, \vert \mathcal{C}[i] \vert$}{
\textbf{if $\PSV[\mathcal{C}[i][j]] < \NSV[\mathcal{C}[i][j - 1]]$ then} \textbf{return} $false$\;
}}
\Return $true$\;
\end{algorithm}

\section{Deferred material from Section~\ref{sec:preliminaries}}\label{sec:deferred-preliminaries}

Given $\alpha, \beta \in \Sigma^*$, the lexicographic order $<_{lex}$ is defined as follows: $\alpha <_{lex} \beta$ if and only if there exists $j \in \mathbb{N}$ such that $\alpha[1, j] = \beta[1, j]$ and $j = \vert \alpha \vert$ or $\alpha[j + 1] \neq \beta[j + 1]$. 

Given $T\in\Sigma^n$, the suffix array $\SA(T)$ of $T$ is a permutation of $[n]$ such that for any $1 \leq i < j \leq n$ holds $T[\SA(T)[i], n] <_{lex} T[\SA(T)[j], n]$.

The longest common prefix $lcp(\alpha, \beta)$ between $\alpha$ and $\beta$ is defined as $lcp(\alpha, \beta) = \lambda$ if and only if for every $i \in [\lambda]$ it holds $\alpha[i] = \beta[i]$ and $\lambda = \min(\vert \alpha \vert, \vert \beta \vert)$ or $\alpha[\lambda + 1] \neq \beta[\lambda + 1]$.

The Longest Common Prefix array $\LCP(T)$ of $T$ is defined as $\LCP(T)[i] = lcp(T[\SA(T)[i - 1], n], T[\SA(T)[i], n])$, if $i \in [2, n]$, and $\LCP(T)[i] = 0$, otherwise.

The Burrows-Wheeler Transform $\BWT(T)$ is a permutation of the characters of $T$ such that $\BWT(T)[i]$ the character in $T$ to the left of the starting position of the $i$-th suffix of $T$, namely $\BWT(T)[i] = T[\SA[i] - 1]$, if $i \in [2, n]$, and $\BWT(T)[i] = T[n]$, otherwise.


\end{document}